\newtheorem{theorem}{Theorem}
\newtheorem{lemma}{Lemma}
\newtheorem{definition}{Definition}
\definecolor{BrickRed}{RGB}{178,34,34}
\def\ps@IEEEtitlepagestyle{%
  \def\@oddfoot{\mycopyrightnotice}%
  \def\@evenfoot{}%
}
\def\mycopyrightnotice{%
  {\footnotesize 
  \begin{minipage}{\textwidth}
  \centering
  \textbf{This work has been submitted to the IEEE for possible publication. Copyright may be transferred without notice, after which this version may no longer be accessible.}
  \end{minipage}
    \hfill}% <--- Change here
  \gdef\mycopyrightnotice{}% just in case
}
\begin{document}

\title{SPDL: Blockchain-secured and Privacy-preserving Decentralized Learning}

\author{Minghui~Xu,~\IEEEmembership{Member,~IEEE,}
    % ~\IEEEmembership{Student,~IEEE,} 
    % ~\IEEEmembership{Member,~IEEE,}
    Zongrui~Zou,%~\IEEEmembership{Member,~IEEE,}
    Ye~Cheng, % ~\IEEEmembership{Member,~IEEE,}
    Qin~Hu,~\IEEEmembership{Member,~IEEE,}
    Dongxiao~Yu,~\IEEEmembership{Senior~Member,~IEEE,}
    Xiuzhen~Cheng,~\IEEEmembership{Fellow,~IEEE}
\thanks{M. Xu, Z. Zou, Y. Cheng, D. Yu, and X. Cheng are with the School of Computer Science and Technology, Shandong University, Qingdao, 266510, P. R. China. E-mail: mhxu@sdu.edu.cn; zou.zongrui@mail.sdu.edu.cn; chengye0311@163.com; \{xzcheng, dxyu\}@sdu.edu.cn}% <-this % 
\thanks{Q. Hu is with the Department of Computer and Information Science, Indiana University-Purdue University Indianapolis, USA. E-mail: qinhu@iu.edu}}
% \author{Michael~Shell,~\IEEEmembership{Member,~IEEE,}
%         John~Doe,~\IEEEmembership{Fellow,~OSA,}
%         and~Jane~Doe,~\IEEEmembership{Life~Fellow,~IEEE}% <-this % stops a space
% \thanks{M. Shell was with the Department
% of Electrical and Computer Engineering, Georgia Institute of Technology, Atlanta,
% GA, 30332 USA e-mail: (see http://www.michaelshell.org/contact.html).}% <-this % stops a space
% \thanks{J. Doe and J. Doe are with Anonymous University.}% <-this % stops a space
% \thanks{Manuscript received April 19, 2005; revised August 26, 2015.}}

% Minghui Xu, Zongrui Zou, Ye Cheng, Xiuzhen Cheng, Dongxiao Yu, Qin Hu

% The paper headers
\markboth{Journal of \LaTeX\ Class Files,~Vol.~14, No.~8, August~2015}%
{Shell \MakeLowercase{\textit{et al.}}: Bare Demo of IEEEtran.cls for IEEE Journals}

\IEEEtitleabstractindextext{
% As a general rule, do not put math, special symbols or citations
% in the abstract or keywords.
\begin{abstract}
	Decentralized learning involves training machine learning models over remote mobile devices, edge servers, or cloud servers while keeping data localized. Even though many studies have shown the feasibility of preserving privacy, enhancing training performance or introducing Byzantine resilience, but none of them simultaneously considers all of them. Therefore we face the following problem: \textit{how can we efficiently coordinate the decentralized learning process while simultaneously maintaining learning security and data privacy?} To address this issue, in this paper we propose SPDL, a blockchain-secured and privacy-preserving decentralized learning scheme. SPDL integrates blockchain, Byzantine Fault-Tolerant (BFT) consensus, BFT Gradients Aggregation Rule (GAR), and differential privacy seamlessly into one system, ensuring efficient machine learning while maintaining data privacy, Byzantine fault tolerance, transparency, and traceability. To validate our scheme, we provide rigorous analysis on convergence and regret in the presence of Byzantine nodes. We also  build a SPDL prototype and conduct extensive experiments to demonstrate that SPDL is effective and efficient with strong security and privacy guarantees.
\end{abstract}

\begin{IEEEkeywords}
	Decentralized Learning; Byzantine resilience; Blockchain; Privacy preservation
\end{IEEEkeywords}
}

% make the title area
\maketitle

\IEEEpeerreviewmaketitle

\section{Introduction}

% Mobile phones, IoT sensors, vehicles are generating a large amount of data nowadays. Machine learning models trained with real-time big data can lead to tremendous improvements in large-scale applications such as illnesses detection, outbreak discovery, disaster warning. Besides, with the increasing amount of data and growing complexity of machine learning models, there is a rigid demand for utilizing computational hardware and storage owned by decentralized users. 

With the increasing amount of data and growing complexity of machine learning models, there is a rigid demand for utilizing computational hardware and storage owned by various entities in a distributed network. State-of-the-art distributed machine learning schemes adopt three major network topologies shown in Fig.~\ref{fig:distributed}. Federated learning \cite{bonawitz2017practical} utilizes an efficient centralized network illustrated in Fig.~\ref{fig:distributed}(a), where a parameter server aggregates gradients computed by distributed devices and updates the global model for them while preserving privacy since devices compute locally without communicating with each other. The fragility of the centralized network topology lies in that a centralized parameter server suffers from the single point of failure problem (a server might crash or be Byzantine). 
% Second, even a single Byzantine worker can prevent model from convergence if using linear aggregation method \cite{blanchard2017machine}. 
To solve this issue, El-Mhamdi \textit{et al.} \cite{el2020genuinely} proposed a Byzantine-resilient learning network shown in Fig.~\ref{fig:distributed}(b), which substitutes the centralized server with a server group in which no more than $1/3$ servers can be Byzantine. 
% The server group utilizes a Byzantine-resilient gradient aggregation rule (GAR) to ensure model convergence against Byzantine servers and workers.

\begin{figure}[!htbp]
	\centering
	\includegraphics[width=3.5in]{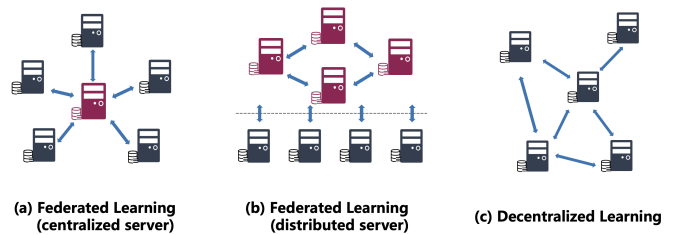}
	\caption{Three major network topologies adopted in distributed learning}
	\label{fig:distributed}
\end{figure}

In this paper, we make a step further to break the barriers among the parameter servers and the computational devices, and allow all nodes to train models in a decentralized network, as shown in Fig.~\ref{fig:distributed}(3). Such a decentralized network is frequently adopted in ad hoc networks, edge computing, Internet-of-Things (IoT), decentralized applications (Dapp), etc. It can greatly unleash the potential for building large-scale (even worldwide) machine learning models that can reasonably and fully maximize the utilization of computational resources \cite{lian2017can}. Besides, devices such as mobile phones, IoT sensors and vehicles are generating a large amount of data nowadays. A decentralized network filled with real-time big data can lead to tremendous improvements in large-scale applications, e.g., illness detection, outbreak discovery, and disaster warning, which involve a large number of decentralized edge and cloud servers from different regions and/or countries. 
However, this poses a new critical challenge:
\textit{How can we efficiently coordinate the decentralized learning process while simultaneously maintaining learning security and data privacy?}

% Concerning state-of-the-art distributed machine learning schemes shown in Fig.~\ref{fig:distributed}, the federated learning \cite{bonawitz2017practical} introduces an efficient architecture where a centralized parameter server aggregates gradients computed by workers and updates the global model for all nodes while preserving the privacy among workers due to localized training. The fragility of federated learning is not far to see. First, even a single Byzantine work can prevent model from convergence if using linear aggregation method \cite{blanchard2017machine}. Second, a centralized parameter server suffers from a single point of failure. To solve the issues, El-Mhamdi \textit{et al.} \cite{el2020genuinely} proposed a genuinely Byzantine machine learning scheme, which substitutes a centralized server with a server group in which no more than $f$ servers can be Byzantine. The server group utilizes a Byzantine-resilient gradient aggregation rule (GAR) to ensure model convergence against Byzantine servers workers. 

% In this paper, we make a step further to break the barrier among servers and workers but allow nodes training models in a decentralized network, namely decentralized learning. The decentralized learning can greatly unleash the potential for building large-scale (even worldwide) machine learning  networks that can reasonably and fully utilize computational resources. However, this poses a critical new challenge:

%
Specifically, decentralized learning confronts with the following challenges.
1) Without a fully trusted centralized custodian, users have no incentive but is reluctant to participate in the learning process due to the lack of trust in a decentralized network. Therefore, it is highly possible that the volume of data might be insufficient to train a reliable model.
2) In decentralized learning, a Byzantine node who can behave arbitrarily (e.g., crash or launch attacks) might prevent the model from convergence or interrupt the training process. 
3) It is challenging to make the trade-off between privacy \& security as well as efficiency, and to make data sharing frictionless with privacy and transparency guarantees. 

To overcome the above challenges, we propose SPDL, a decentralized learning framework that simultaneously ensures strong security using blockchain (as an immutable distributed ledger), BFT consensus, and BFT GAR, and  preserves privacy utilizing local gradient computation and differential privacy (DP). Blockchain, as a key component, maintains a complete, immutable, traceable record of the machine learning process, covering user registration, gradients by rounds, and model parameters. With blockchain, a user can identify illegal and Byzantine peers, update its local model without concerns, and ultimately trust the SPDL scheme. The BFT consensus algorithm and the BFT GAR are embedded into the blockchain. Concretely, the BFT consensus algorithm ensures consistency of model transition during multiple rounds while the BFT GAR offers an effective method of detecting and filtering Byzantine gradients at each round. Concerning privacy, we let nodes compute gradients with their local training data, and only share perturbed gradients with peers, which provides a strong privacy protection. 

Our contributions are summarized as follows
\begin{enumerate}
\item To our best knowledge, this is the first secure and privacy-preserving machine learning scheme for decentralized networks in which learning processes are free from trusted parameter servers. 
\item SPDL makes use of DP for data privacy protection and seamlessly embeds BFT consensus and BFT GAR into a blockchain system to benefit model training with Byzantine fault tolerance, transparency, and traceability while retaining high efficiency. 
%and DP into a blockchain system to benefit model training with Byzantine fault tolerance, transparency, and traceability, and data privacy while keeping the efficiency. 
\item We conduct rigorous convergence and regret analyses on SPDL in the presence of Byzantine nodes,  build a prototype, and carry out extensive experiments to demonstrate the feasibility and effectiveness of SPDL.
\end{enumerate}

This paper is organized as follows. Section~\ref{sec:preliminaries} outlines the necessary preliminary knowledge needed by the development of SPDL. Section~\ref{sec:protocol} details the SPDL protocol. The analysis on convergence and regret are presented in Section~\ref{sec:analysis}. Evaluation results are reported in Section~\ref{sec:evaluation}. We summarize the most related work in Section~\ref{sec:related} and conclude this paper in Section~\ref{sec:conclusion}.

\section{Related Work} \label{sec:related}
\subsection{Privacy and Byzantine Resilience in Distributed Learning}
Private learning schemes include secure multiparty computation, encryption,  homomorphic encryption,  differential privacy,  and aggregation models.  
For details, we recommend two comprehensive surveys \cite{liu2021machine, yang2019federated}  to the interested readers.  Su and Vaidya  \cite{su2016fault} introduced the distributed optimization problem in the presence of Byzantine failures. The problem was formulated as one in which each node has a local cost function, and aims to optimize the global cost function.  The proposed method, namely the synchronous Byzantine gradient method (SBG), first trims the largest $f$ gradients and the smallest $f$ gradients, then computes the average of the minimum and the maximum of the remaining $N-2f$ values. This approach sheds light on providing byzantine resilience for distributed learning. Following this idea, many byzantine-resilient aggregation rules were proposed. They all work towards a common goal -- more precisely and efficiently trim the byzantine values. 
Blanchard  \textit{et al.} were the earliest to tackle the Byzantine resilience problem in distributed learning by the Krum algorithm which can guarantee convergence despite $f$ Byzantine workers (in a server-worker architecture).  Krum stimulates the arrivals of many BFT aggregation rules including Median \cite{yin2018byzantine},  Bulyan \cite{guerraoui2018hidden}, and  MDA \cite{el2020genuinely}.  
A recent work~\cite{guerraoui2021differential} demonstrates that these aggregation rules can function together with the DP technique under proper assumptions.

\subsection{Blockchain-Enhanced Distributed Learning}
%The effectiveness and efficiency of the proposed framework are demonstrated with the experiments of identifying patients with COVID-19. 
%
%
The BinDaaS \cite{bhattacharya2019bindaas} framework provides a blockchain-based deep learning service, ensuring data privacy and confidentiality in sharing Electronic Health Records (EHRs). With BinDaaS, a patient can mine a block filled with a private health record, which can be accessed by legitimate doctors. BinDaaS trains each model based on a patient's private EHRs independent of others' data. In contrast, decentralized learning intends to train a global model using the data owned by individual nodes, arising more security and privacy concerns. 
Hu \textit{et al.} \cite{9615370} utilized a blockchain and a game-theoretic approach to protect the user privacy for federated learning in mobile crowdsensing. The collective extortion (CE) strategy was proposed in \cite{hu2021nothing} as an incentive mechanism that can regulate workers's behavior. These two game-based approaches cannot strictly guarantee the safety of model training against byzantine nodes.  
Lu \textit{et al.} \cite{lu2019blockchain} developed a learning scheme that protects privacy by differential privacy, and proposed the Proof of Training Quality (PoQ) consensus algorithm for model convergence. This scheme does not consider byzantine users who might disturb the training processes by proposing erroneous model parameters.

FL-Block \cite{FLblock} allows end devices to train a global model secured by a PoW-based blockchain. LearningChain \cite{learning2018chain} is a differential privacy based scheme to protect each party’s data privacy, also resting on a  PoW-based blockchain.  Warnat-Herresthal \textit{et al.} \cite{warnat2021swarm} proposed the concept of swarm learning which is analogous to decentralized learning, in which each node can join the learning process managed by Ethereum (again, PoW-based) and smart contracts. 
Compared to a pure adoption of centralized federated learning schemes, PoW-based blockchains can help avoid the single point of failures and mitigate poisoning attacks caused by a central parameter server. With PoW, a miner can propose a block containing model parameters used for a global model update. However, this method cannot rigorously prevent malicious miners from harming the training processes by proposing wrong updates. The PoW consensus itself is not sufficient to judge whether given parameters are byzantine or not. Therefore, PoW-based blockchains are vulnerable to poisoning attacks launched by byzantine nodes. Besides, PoW-based blockchains are hard to scale since PoW can incur much overhead and result in heavy consumption of computational resources. 
Another noteworthy system is Biscotti \cite{shayan2020biscotti}, which utilizes a commitment scheme to protect data privacy and adopts a novel Proof-of-Federation based blockchain for security guarantee in decentralized learning.  Using a commitment scheme, the aggregation of parameters can be verifiable and tamper-proof. 

In this paper, we leverage the DP technique for privacy protection since adding noises (only needs an addition operation using pre-calculated noise) is more efficient than employing complicated cryptographic tools. In addition, using DP and a BFT consensus, we can ensure the byzantine fault tolerance for the whole training process, which can not be realized by existing works. Our unique contributions lie in two aspects: 1) providing security and privacy guarantees for the complete training process, by seamlessly integrating DP, BFT aggregation, and blockchain in one system,  considering byzantine nodes; and 2) offering rigorous analysis on the Byzantine fault-tolerance, convergence and regret for decentralized learning. 
%

%The aforementioned discussions address the issues of realizing a blockchain-secured and privacy-preserving decentralized learning scheme. Compared with the previous works, the unique contributions of this paper lies in two aspects: 1) providing security and privacy guarantees for the complete training process, by seamlessly integrating DP, BFT aggregation, and blockchain in one system,  considering byzantine nodes; and 2) offering rigorous analysis on the Byzantine fault-tolerance, convergence and regret for decentralized learning. 

\section{Preliminaries} \label{sec:preliminaries}

\subsection{Decentralized Learning}
\label{sec:sub:decentralized:learning}
With the prosperity of machine learning tasks that need to process massive data,  distributed learning was proposed to coordinate a large number of devices %GPU clusters 
to complete a training process so as to achieve a rapid convergence.  In recent years, the research on distributed learning was mainly carried out along two directions: centralized learning and decentralized learning, whose schematic diagrams are shown in Fig.~\ref{fig:distributed}.  A centralized topology uses a parameter server (PS) to coordinate all workers by gathering gradients,  performing local updates, and broadcasting new parameters; while in a decentralized topology,  all nodes are considered as equal and exchanges information without the intervention of a PS.

Decentralized learning has many advantages over centralized learning.  In \cite{lian2017can},  Lian \textit{et al.} rigorously proved that a decentralized algorithm has lower communication complexity and possesses the same convergence rate as those under a centralized parameter-server model. 
Furthermore, a centralized topology might not hold in a decentralized network where no one can be trusted enough to act as a parameter server. Therefore, We consider the following decentralized optimization during a learning process:
% the following formula has been checked
$$\min\limits_{x\in\mathbb{R}^N}   f(x) = \frac{1}{N}\sum_{i=1}^{N}\mathbb{E}_{\xi\sim D_i} F(x;\xi),$$
where $N$ is the network size, $D_i$ is the local data distribution for node $i$, $F(x;\xi)$ denotes the  loss function given model parameter $x$ and data sample $\xi$. Let $f_{i}(x)=\mathbb{E}_{\xi\sim D_i} F(x;\xi)$.  
In a fully-connected graph, during any synchronous round, each node performs a deterministic aggregation function (\textit{e.g.} average function) $\mathcal{K}$ on perturbed gradients received from all other peers to update its local parameter, i.e.,
$$x_i^{(t+1)} = x_i^{(t)} - \gamma \cdot \mathcal{K}(g_1^{(t)},g_2^{(t)},\cdots,g_n^{(t)}),$$
where $\gamma$ is the learning rate. Notice that since $\mathcal{K}$ is deterministic, all nodes should have exactly the same parameter and should initialize it with the same value. In this case, we denote by $\mathcal{A}$ an arbitrary synchronous distributed learning algorithm that updates the mutual parameter $x$, and use $\mathcal{A}(D_1,D_2,\cdots,D_n;t)$ to denote the output parameter $x$ at round $t$. In the rest of this paper we omit the subscript if all peers have the same local parameter.

\subsection{Blockchain Basics}
\label{sec:blockchain:basics}
Blockchain, as a distributed ledger, refers to a chain of blocks linked by hashes and spread over all the  nodes in a peer-to-peer network, namely blockchain network. A full node stores a full blockchain in its local database. A blockchain starts from a genesis block, and each block except for the genesis block is chained to a previous block by referencing its hash. %A block is comprised of a header and a body. Generally, a header consists of timestamp, nonce, previous hash, block hash, and Merkle root while a body includes all serialized transactions. %Most application-specific information is embodied in transactions, which should be verified before being organized into a block. 
Typically, there are two categories of blockchain systems based on scale and openness: permissioned and permissionless. In this paper, we adopt a permissioned blockchain since it can provide faster speed and more restricted registration control than permissionless ones. 

Blockchain consists of three major components: blockchain network, distributed ledger, and consensus algorithm. A blockchain system organizes registered nodes into a P2P network, formulating a complete graph. A distributed ledger is immutable and can be organized as a chain, a Direct Acyclic Graph (DAG), or a mesh. In this paper, we use a chain as the data structure of our ledger. 
As the core of a blockchain system, the consensus process determines how to append a new block to the chain. Two types of consensus algorithms are commonly adopted: proof-of-resources and message passing. Proof-of-resources means that nodes compete for proposing blocks by demonstrating their utilization of resources, e.g., computational resources, stake, storage, memory and specific trust hardware. On the other hand, message passing based consensus has been widely researched in the area of distributed computing. Such algorithms always provide clear assumptions on nodes' faulty behaviors such as fail-stop and Byzantine attacks. In this paper, we leverage Byzantine fault tolerance (BFT) consensus algorithm, which can address Byzantine nodes who can launch arbitrary attacks.

\subsection{Gradient Aggregation Rule (GAR)}
A Gradient Aggregation Rule (GAR) is used to aggregate gradients received from peers during each round. A traditional GAR averages gradients to eliminate errors. Concerning gradients generated by Byzantine nodes, a GAR can be more elaborately designed to inject robustness. For example, Krum and Multi-Krum are the pioneering GARs that satisfy Byzantine resilience \cite{blanchard2017machine}. The essence behind these two GARs are to choose the gradient with the closest $(N-f)$ ($N$ denotes the network size and $f$ denotes the number of Byzantine nodes) neighbors based on the assumption that the honest majority should have similar gradients. Median \cite{yin2018byzantine} and MDA \cite{el2020genuinely} are another two GARs that adopt analogous ideas to ensure the BFT gradient aggregation. In this paper, SPDL leverages Krum as the BFT GAR but is not limited to it.

\subsection{Differential Privacy}
The privacy guarantee is of vital importance if the nodes carrying out decentralized learning do not admit their local training data to be shared. Although each node communicates with its neighbors by transmitting parameters instead of sending raw data, the risk of leaking information still exists \cite{wang2019beyond}.  Differential privacy is an effective method to avoid leaking any information of a single individual by adjusting the feedback of the query operations, no matter what auxiliary information the adversary node might have. In decentralized learning, the process of exchanging parameters involves a sequence of queries. The differential privacy in our setup can be formally defined as follows:
\begin{definition}
	(($\epsilon,\delta$)-differential-privacy) Denote by $\mathcal{A}$ an arbitrary synchronous distributed learning algorithm that updates model parameter $x$. For any node $i$ in a decentralized system and any two possible local data-sets $D_i$ and $D'_i$ with $D_i$ differing from $D'_i$ by at most one record, if for any round $t$ and any $S\subseteq Range(\mathcal{A})$, it holds that
	$$Pr\left(\mathcal{A}(\mathcal{D},D_i;t)\in S\right) \leq e^{\epsilon}Pr(\mathcal{A}(\mathcal{D},D'_i;t)\in S)+\delta,$$
	we then claim that $\mathcal{A}$ preserves ($\epsilon,\delta$)-differential-privacy, and
	pack all $D_j(j\neq i)$ into $\mathcal{D}$.
\end{definition}

\begin{definition} 
	For any function $f:\mathcal{D}\rightarrow R^{N}$, the $L_2$-sensitivity of $f$ is defined as
	$$\Delta_{2} f = \underset{d_{1},d_{2}}{\max} || f(d_{1})-f(d_{2}) ||,$$
	
	for all $d_{1},d_{2}$ differing in at most one element.
\end{definition}

Differential privacy can be realized by adding Gaussian noises to the query results \cite{Jiang2021TKDE}.  The following lemma, proved in \cite{yu2021decentralized}, demonstrates how to properly choose a Gaussian noise. 

\begin{lemma}\label{41}
	For each node transmitting gradients perturbed by Gaussian noise with distribution $\mathcal{N}(0,\sigma^{2})$, the gradient exchanges within $T$ successive rounds preserve ($\epsilon,\delta$)-differential-privacy as long as $\sigma\geq CT\gamma \sqrt{2\ln (1.25/\delta)}/\epsilon$, where $C=\Delta_{2} g^{(t)}$ and $\gamma$ is the learning rate.
\end{lemma}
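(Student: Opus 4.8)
\smallskip
\noindent\textbf{Proof proposal.}
The plan is to reduce the statement to the classical Gaussian mechanism, leaving only a sensitivity estimate that has to absorb the $T$ rounds and the learning rate $\gamma$. Fix node $i$ and two neighboring local datasets $D_i,D_i'$; following the definition, all other datasets $\mathcal{D}$ are held fixed. Everything an adversary can observe over the $T$ rounds --- in particular the shared parameters $x^{(1)},\dots,x^{(T)}$, obtained via $x^{(t+1)}=x^{(t)}-\gamma\,\mathcal{K}(\cdot)$ --- is then a (possibly randomized, but data-independent) function of node $i$'s transmitted stream $g_i^{(1)}+z^{(1)},\dots,g_i^{(T)}+z^{(T)}$ with $z^{(t)}\sim\mathcal{N}(0,\sigma^2)$. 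So, by post-processing immunity, it suffices to (i) bound the $L_2$-sensitivity of the map $D_i\mapsto(g_i^{(1)},\dots,g_i^{(T)})$ and (ii) invoke the Gaussian-mechanism guarantee.

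For step (ii) I would use the standard fact underlying \cite{yu2021decentralized}: a query of $L_2$-sensitivity $\Delta$ released with additive noise $\mathcal{N}(0,\sigma^2)$ is $(\epsilon,\delta)$-differentially private whenever $\sigma\ge\Delta\sqrt{2\ln(1.25/\delta)}/\epsilon$. The lemma then follows the instant step (i) supplies $\Delta\le C\,T\,\gamma$ with $C=\Delta_2 g^{(t)}$. For step (i), the starting point is that replacing one record in $D_i$ leaves every other node's gradient untouched and moves $g_i^{(t)}$ by at most $C$ in $L_2$ norm at a single round. The factors $\gamma$ and $T$ then enter by propagating this perturbation through the iteration: each per-round discrepancy is injected into the common parameter only after being scaled by $\gamma$, and over $T$ consecutive rounds these $\gamma$-scaled discrepancies accumulate to at most $\sum_{t=1}^{T}\gamma C=\gamma T C$. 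Setting $\Delta=\gamma T C$ reproduces exactly $\sigma\ge CT\gamma\sqrt{2\ln(1.25/\delta)}/\epsilon$.

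The main obstacle is making the $T$-round sensitivity bound in step (i) rigorous rather than heuristic. The difficulty is that the two neighboring runs march along a \emph{shared} parameter $x^{(t)}$ on which the next round's gradients are evaluated, so once the runs diverge the per-round gradient discrepancies are coupled across rounds and could a priori grow much faster than linearly in $T$; one must combine the uniform per-round bound $\Delta_2 g^{(t)}\le C$ with the $\gamma$-scaling of each update to show that the coupled discrepancies still sum to only $O(\gamma T C)$. A secondary point is that the nonlinear aggregation $\mathcal{K}$ (Krum, in our instantiation) must be shown neither to amplify a single node's bounded perturbation nor to distort the injected Gaussian, so that the post-processing reduction of the first paragraph is valid. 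Once the accumulation is controlled and the benign behaviour of $\mathcal{K}$ is verified, the remaining manipulation is immediate, so essentially all of the work lives in the sensitivity-propagation step.
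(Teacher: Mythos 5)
First, a point of comparison: the paper does not actually prove this lemma --- it imports it wholesale from \cite{yu2021decentralized} --- so there is no in-paper argument to measure your proposal against, and it must be judged on its own. On those terms it has a genuine gap, and it is precisely the one you flag yourself without closing: the entire argument rests on the claim that the $L_2$-sensitivity of the map $D_i\mapsto(g_i^{(1)},\dots,g_i^{(T)})$ is at most $\gamma TC$, and your justification for it does not work. The constant $C=\Delta_2 g^{(t)}$ bounds how far one round's gradient moves when a single record of $D_i$ changes \emph{at a fixed parameter}; once the two neighboring runs have diverged, $g_i^{(t)}$ is evaluated at different values of $x^{(t)}$ in the two runs, so the round-$t$ discrepancy is not bounded by $C$ at all without an extra smoothness assumption on $\nabla F$ --- and with a Lipschitz-gradient assumption the coupled discrepancies can compound geometrically in $t$ rather than summing to $\gamma TC$. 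Moreover, the $\gamma$-scaling you invoke governs the \emph{parameter} drift, whereas the Gaussian noise is injected into the \emph{gradient}; transferring that factor into the sensitivity of the transmitted stream is exactly the unjustified step that makes the target bound appear.

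The standard route --- almost certainly the one taken in the cited reference --- sidesteps the coupling entirely and is the argument you should use. Treat each round as a fresh invocation of the Gaussian mechanism on the query $D_i\mapsto g_i^{(t)}$ \emph{conditioned on all previously released outputs}: conditioned on the past, $x^{(t)}$ is identical in the two neighboring runs, so the per-round sensitivity genuinely is $C$ and the stated $\sigma$ makes each round $(\epsilon/T,\delta)$-differentially private (up to the $\gamma$ bookkeeping inherited from the cited formulation). Adaptive sequential composition over the $T$ rounds then yields the claim; adaptivity of the queries is exactly what composition theorems absorb, so no propagation of sensitivity through the iteration is needed. Your secondary worry about $\mathcal{K}$ also dissolves under this framing: Krum acts on the already-noised gradients, hence is pure post-processing and cannot degrade the guarantee. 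What remains, and what the lemma statement quietly ignores, is that basic composition inflates the failure probability to $T\delta$ rather than $\delta$; if you want the clean $(\epsilon,\delta)$ conclusion you must either absorb that factor into the constants or invoke an advanced composition bound.
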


% \begin{lemma}\label{41}
% 	(Composition Theorem). Let $\mathcal{M}_{i}:\mathcal{D}\rightarrow R$ be a randomized algorithm that is ($\epsilon_{i}$,$\delta{(i)}$)-deferentially private for $i \in [k]$. Then if $\mathcal{M}_{[k]}(x) = (\mathcal{M}_{1}(x),\mathcal{M}_{2}(x),...\mathcal{M}_{k}(x))$ is ($\sum_{i=1}^{k}\epsilon_{i}$,$\sum_{i=1}^{k}\delta_{i}$)-differential privacy.
% \end{lemma}
In brief, the differential private scheme we employed in this paper is sketched as follows. For any node $i$, we use random Gaussian noise to perturb its gradients before transmitting to other nodes. When nodes obtain the result of the aggregation function $\mathcal{K}$ whose inputs are their perturbed gradients, the differential privacy for node $i$ is guaranteed.

\section{The Protocol}
\label{sec:protocol}

\subsection{Model and Assumptions}
We focus on scenarios where nodes are able to communicate with each other in a decentralized network. Specifically, we formalize the decentralized communication topology as a directed and fully connected graph $G=(V,E)$, where $V (|V|=N)$ denotes the set of all peers and for any $i$,$j\in V$,we have $(i,j)\in E$. Time is divided into epochs (denoted by $k$), with each consisting of  synchronous rounds (denoted by $t$), and a model can be trained within each epoch. %Within each epoch $k$, a model can be trained and time is divided into synchronous rounds (denoted by $t$)
We denote frequently-used notations of transaction, block, blockchain, chain of block headers, by $tx$, $B_{k}^{(t)}$, $BC_{k}^{(t)}$, and $BH_{k}^{(t)}$, respectively.

We assume the network is unreliable with at most $f$ possible Byzantine nodes, and $N=3f+1$ . A Byzantine node $i$ can behave arbitrarily. For example, it may refuse to compute gradient, transmit arbitrary but unlawful gradients to mislead correct nodes, and make incorrect votes during the consensus process. When $i$ chooses not to send any data in a synchronous round $t$, it's neighbor acts like receiving $g_i^{(t)}=0$.

\begin{figure*}[!htbp]
	\centering
	\includegraphics[width=7in]{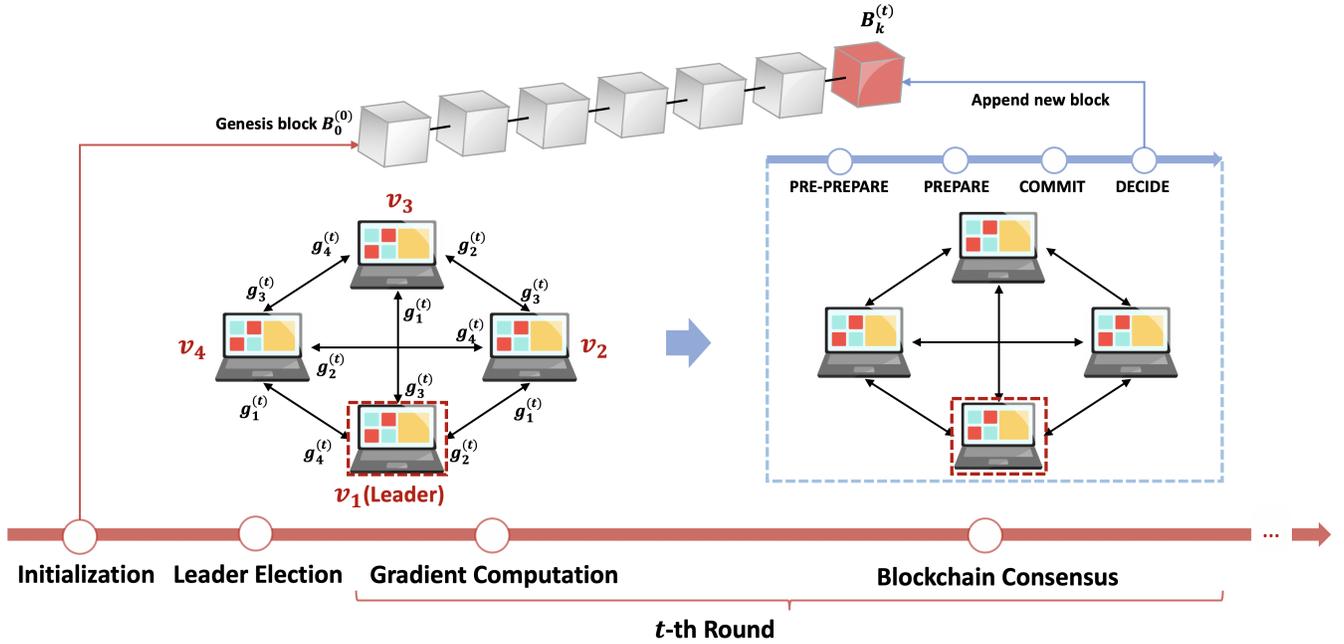}
	\caption{SPDL workflow ($t$-th round)}
	\label{fig:b-dpsdg}
\end{figure*}

\subsection{Design Objectives}
\label{sec:sub:design:objectives}
In this subsection, we briefly summarize our design goals.
\begin{enumerate}
\item Decentralization: SPDL should work in a decentralized network setting without the intervention of any centralized party such as a parameter server.  
\item Differential Privacy: SPDL should guarantee $(\epsilon,\delta)$-DP by adding random Gaussian noises to perturb gradients. Meanwhile we aim to reach a balance between privacy leakage and convergence rate.  
\item Byzantine Fault-Tolerance: SPDL can ensure convergence against at most $f(N=3f+1)$ Byzantine nodes who can behave arbitrarily.  
\item Immutability, Transparency, and Traceability: the full record of the machine learning  process should be immutable and transparent, and provide traceability enabling Byzantine node detection.
\end{enumerate}

\begin{table}
\caption{Summary of Notations}
\label{table:summary:notation}
\centering
\begin{tabular}{c|c}
  \hline
  % after \\: \hline or \cline{col1-col2} \cline{col3-col4} ...
  \textbf{Symbol} & \textbf{Description} \\
  \hline
  	$\tilde{g}_{i}^{(t)}$ & the gradient computed by $i$ in round $t$\\
	$\mathcal{G}_{i}^{(t)}$ & the random noise added to $i$'s gradient in round $t$\\
	$g_i^{(t)}$ & the perturbed gradient to be transmitted in round $t$\\
	$\eta$ & the learning rate\\
	$x^{(t)}$ & model parameters at the end of round $t$\\
	$\xi_{i}^{(t)}$ & datasets randomly sampled from local datasets $D_i$\\
	% $G^{(t)}$ & $g_i^{(t)}+r_i^{(t)}$ are randomly sampled from r.v. $G^{(t)}$\\
	% $g^{(t)}$ & $\mathbb{E}(g_i^{(t)})$\\
	$n, f$ & the number of nodes and Byzantine nodes\\
	$\sigma^2$ & the variance of Gaussian noise\\
	$\sigma_f^2$ & the upper bound of $\mathbb{E}\|\mathbb{E}(g_i^{(t)}) - g_i^{(t)}\|^2$\\
	$\Delta^{(t)}$ & the output of BFT GARs\\
	$B_{k}^{(t)}, BC_{k}^{(t)}$ & block and blockchain in $t$-th round of $k$-th epoch \\
	$\epsilon, \delta$ & budgets of differential privacy\\
	$d$ & the dimension of gradients\\
	$N(i)$ & the neighbors of node $i$\\
  \hline
\end{tabular}
\end{table}

\subsection{Protocol Details}

\subsubsection{Initialization} 
%blockchain initialization
Initially, each node creates a pair of private key $\sk$ and public key $\pk$, and generates its unique 256-bit identity $id$ based on $\pk$. Public keys and identities are broadcast through the network to be publicly known by all nodes. Then a genesis block $B_0$ is created, which records the information of the nodes who initially participate in the blockchain network. A new coming node should have a related $tx$ being added to  the blockchain before joining the permissioned network, where $tx$ contains necessary information including its $pk$, $id$, IP address, etc. Initially, all nodes have an identical reputation value, that is $w_i=\hat{w}$.
%
%For decentralized learning, e
Each node initializes itself with a learning rate $\gamma$, the number of total rounds $T$, and the variance of Gaussian noise for perturbing the gradients. For simplicity and consistency of the iterating process, we assume all nodes start the learning procedure with the same initial parameter value $x_i^{(0)}$. After initialization, nodes undergo a leader election process to determine who is in charge of the training process. 

\subsubsection{Leader Election}
\begin{algorithm}
	\DontPrintSemicolon
	\caption{Leader Election}
	\label{alg:leader:election}  
	$(h_i,\pi_i)$ = VRF($\sk_i$, $seed$)\;
	$L_i={(h_i,\pi_i, id_i)}$\;
	Broadcast $(h_i,\pi_i)$ to the network and receive $(h,\pi)$ form peers\;
	\While{TRUE}{
		\If {receive $(h_j, \pi_j)$ from $id_j$ \&\& VerifyVRF($pk, h_j, \pi_j , seed$)=1}{
			add $(h_j, \pi_j, id_j)$ to $L_i$\;
		}
		\If {$\mathtt{Time()}>start+\delta_1$}{
			select the largest $h_{max}$ from $L_i$ and obtain $id_{max}$\;
			\KwOut{$id_{max}$}
		}
	}
\end{algorithm} 

Each node executes the leader election algorithm shown in Algorithm~\ref{alg:leader:election}. The algorithm is based on the verifiable random function (VRF), which takes as inputs a private key $\sk$ and a random seed, and outputs a hash string $h$ as well as the corresponding proof $\pi$. Each contender broadcasts $(h_i,\pi_i)$ to the network and receives $(h,\pi)$ from its peers. Note that each node is assigned with a reputation variable $r\in [0,1]$. A node $i$ with $r_i=0$ is prohibited from being a leader. Then the node with the largest $h$ and $r>0$ is recognized as the leader who is responsible for the blockchain consensus. The case when more than one leaders are selected is extremely small since $h$ has a large space of $2^{256}$ if we adopt the commonly used SHA-256; but if this extreme case happens, all nodes relaunch the leader election process to ensure that only one leader is finally selected. We also set a timeout for the leader election process as $\mathtt{Time()}>start+\delta_1$, where $\mathtt{Time()}$ extracts the current UNIX time. To summarize, our leader election algorithm achieves the following three basic functionalities: 
\begin{itemize}
	\item A node with a zero reputation value has no right of being a leader.
	\item The leader election process possesses full randomness and unpredictability properties.
	\item A Byzantine node cannot disguise itself as a leader since VRF ensures that the proof $h$ is unforgeable.
\end{itemize}

After leader election, nodes start the round-based training process, with each round consisting of the gradient computation and blockchain consensus processes. 

\subsubsection{Gradient Computation} 
At each round, node $i$ exchanges its perturbed gradients with all other nodes in the blockchain network. Specifically, each node preserves a true stochastic gradient $\tilde{g}_{i}^{(t)}$ and a perturbed one $g_{i}^{(t)}$ to be shared. The whole exchange process can be summarized into the following steps:
\begin{itemize}
		\item \textbf{Local Gradient computation:} compute local stochastic gradient $\tilde{g}_{i}^{(t)} = \nabla F_{i}(x^{(t)},\xi_{i}^{(t)})$,  where $\xi_{i}^{(t)}$ is randomly sampled from local dataset $D_i$.
		\item \textbf{Adding noise:} add random Gaussian noise to the local gradient to be shared. The variance of the noise is denoted by input variable $\sigma$.
		\item \textbf{Broadcast gradients:} send the perturbed local gradients to all other nodes, and receive gradients from others at the same time.
\end{itemize}
	
\begin{algorithm}
	\DontPrintSemicolon
	\caption{Gradient Computation}
	\label{alg:dpsdg}  
	\textbf{Initialize:} $x_i^{(0)}$, learning rate $\gamma$, number of total rounds $T$, and variance of noise $\sigma$\;
	\For {$t=0$ to $T-1$}{
			$\triangleright$ \textcolor{blue}{Local Computation} \;
			Randomly sample $\xi_{i}^{(t)}$ and compute local stochastic gradient $\tilde{g}_{i}^{(t)}=\nabla F_{i}(x^{(t)},\xi_{i}^{(t)})$\;
			$\triangleright$ \textcolor{blue}{Adding Noise}\;
			Randomly generate Gaussian noise $\mathcal{G}_{i}^{(t)} \sim \mathcal{N}(0,\sigma^{2})$ and add noise to the variable $g_{i}^{(t)} = \tilde{g}_{i}^{(t)}+\mathcal{G}_{i}^{(t)}$\;
			$\triangleright$ \textcolor{blue}{Broadcast Gradients}\;
			Broadcast $g_{i}^{(t)}$ to the network and receive $g_{j}^{(t)}$ from each peer $j$\;
	}
\end{algorithm}

\subsubsection{Blockchain Consensus}
\begin{algorithm}
	\DontPrintSemicolon
	\caption{Blockchain Consensus}
	\label{alg:leader}  
	$\triangleright$ To prevent deadlock, each node starts a view change if $\mathtt{Time()}>start+\delta_2$\;
	$\triangleright$ \textcolor{BrickRed}{\texttt{PRE-PREPARE}}\;
	\If{role is leader}{
		$\Delta^{(t)}=\mathcal{K}(g_1^{(t)},g_2^{(t)},\cdots,g_n^{(t)})$\\
		$B_k^{(t)}\leftarrow \mathtt{MSGB}( \Delta^{(t)} )$\\
		broadcast $\langle\mathtt{PRE\raisebox{0mm}{-}PREPARE}, id, B_k^{(t)}, h\rangle_{\hat{\sigma}}$\\
	}
	$\triangleright$ \textcolor{BrickRed}{\texttt{PREPARE}}\;
	\If{role is follower}{
		compute $\tilde{\Delta}^{(t)}=\mathcal{K}(g_1^{(t)},g_2^{(t)},\cdots,g_n^{(t)})$\;
		\While{receive $\langle\mathtt{PRE\raisebox{0mm}{-}PREPARE}, id, B_k^{(t)}, h\rangle_{\hat{\sigma}}$}{
			\If{$\sigma$ and $B_k^{(t)}$ are valid and $\tilde{\Delta}^{(t)}\approx B_k^{(t)}.\Delta^{(t)}$}{
				broadcast $\langle\mathtt{PREPARE}, id, h, vote\rangle_{\hat{\sigma}}$\;
			}
		}
	}
	$\triangleright$ \textcolor{BrickRed}{\texttt{COMMIT}}\;
	\While{receive $2f+1$ $\langle\mathtt{PREPARE}, id, h, vote\rangle_{\hat{\sigma}}$}{
		broadcast $\langle\mathtt{COMMIT}, id, vote\rangle_{\hat{\sigma}}$\;
	}
	$\triangleright$ \textcolor{BrickRed}{\texttt{DECIDE}}\;
	\While{receive $2f+1$ $\langle\mathtt{COMMIT}, id, h, vote\rangle_{\hat{\sigma}}$}{
		$\mathtt{Append}(BC_k^{(t)}, B_k^{(t)})$\;
		$x_i^{(t+1)} = x_i^{(t)} - \gamma\Delta^{(t)}$\;
		Update reputation\;
	}
\end{algorithm}  

The blockchain consensus process deeply integrates a blockchain, a BFT consensus protocol (e.g., PBFT, Tendermint), and a BFT aggregation function (e.g., Krum, Median). In this paper, we adopt the Practical Byzantine Fault Tolerant (PBFT) protocol as our consensus backbone due to its effectiveness validated by the Hyperledger Sawtooth. The aggregation rule used in blockchain consensus is Krum. Concretely, the blockchain consensus consists of four phases: \texttt{PRE-PREPARE}, \texttt{PREPARE}, \texttt{COMMIT}, and \texttt{DECIDE}. 

% Pre-Prepare Phase
In the \texttt{PRE-PREPARE} phase, the leader computes an aggregated gradient using an aggregation function $\Delta^{(t)}=\mathcal{K}(g_1^{(t)},g_2^{(t)},\cdots,g_n^{(t)})$, where $\mathcal{K}$ is a $(b,\alpha)$-Byzantine resilient Krum function. The core idea of Krum is to eliminate the gradients that are too far away from others. We use Euclid distance $\|g_{i}^{(t)} - g_{j}^{(t)}\|^2$ to measure how far two gradients are separated. Then we define $near(i)$ to be the set of $n-f-2$ closest gradients to $g_{i}^{(t)}$. We expect that the Krum function chooses one gradient $g_{i}^{(t)}$ that is the ``closest'' to its surrounding gradients. More precisely, the output of Krum is one of its input gradients, and the index of this gradient is:
$$arg \min_{i} \sum_{j\in near(i)} \|g_{i}^{(t)} - g_{j}^{(t)}\|.$$
$\mathtt{MSGB}$ takes as input $\Delta^{(t)}$ and forms a new block $B_k^{(t)}$ which records $\Delta^{(t)}$. Then the leader broadcasts a signed pre-prepare message as $\langle\mathtt{PRE\raisebox{0mm}{-}PREPARE}, id, B_k^{(t)}, h\rangle_{\hat{\sigma}}$.

% Prepare Phase
Inthe \texttt{PREPARE} phase, each follower computes $\tilde{\Delta}^{(t)}=\mathcal{K}(g_1^{(t)},g_2^{(t)},\cdots,g_n^{(t)})$ based on its local perturbed gradients, then waits for pre-prepare messages. If a pre-prepare message is received, the follower first verifies the digital signature $\sigma$ and the block (height, block hash, etc.).  Then it compares $\tilde{\Delta}^{(t)}$ with $B_k^{(t)}.\Delta^{(t)}$. The requirement of $\tilde{\Delta}^{(t)}\approx B_k^{(t)}.\Delta^{(t)}$ means $\tilde{\Delta}^{(t)}\in (B_k^{(t)}.\Delta^{(t)}-\delta, B_k^{(t)}.\Delta^{(t)}+\delta)$, where $\delta$ is a small variation. This condition indicates that each follower should have a similar view on non-Byzantine gradients as the leader. If verification is passed, the follower broadcasts a prepare message $\langle\mathtt{PREPARE}, id, h, vote\rangle_{\hat{\sigma}}$. 

% Commit phase
In the \texttt{COMMIT} phase, if a node receives $2f+1$ valid commit messages, it can broadcast a decision $\langle\mathtt{COMMIT}, id, vote\rangle_{\hat{\sigma}}$ and enter into the following $\mathtt{DECIDE}$ phase. 

% Decide phase
In the \texttt{DECIDE} phase, upon receiving $2f+1$ valid commit messages, a node can append a new block $B_k^{(t)}$ to its local blockchain $BC_k$, update its local gradient as $x_i^{(t+1)} = x_i^{(t)} - \gamma\Delta^{(t)}$, and finally update its reputation. The reputation of a certain node $i$ can be reduced if its gradient deviates from the aggregated gradient by more than $\pi/2$. Even though we do not explicitly introduce the view change, we do have such process to address the case when a leader is a Byzantine node. To avoid the occurrence of a deadlock, we set a timeout in the blockchain consensus process. If $\mathtt{Time()}>start+\delta_2$, each node broadcasts a view change message $\langle \mathtt{VIEW\raisebox{0mm}{-}CHANGE}, id, h\rangle_{\sigma}$ and waits for other peers' responses. Upon receiving $2f+1$ view change messages, a node can abandon the current round.

\section{Theoretical Analysis}\label{sec:analysis}

In this section, we provide both convergence analysis and regret analysis on SPDL in the presence of Byzantine nodes. 

Without loss of generality, let $g_1^{(t)}, g_2^{(t)}, \cdots, g_{n-f}^{(t)}$ be the perturbed gradients sent out by honest nodes in round $t$, and $g_{n-f+1}^{(t)}, g_{n-f+2}^{(t)}, \cdots, g_n^{(t)}$ be the gradients sent out by possible Byzantine nodes in round $t$. We assume that the gradients derived by correct nodes are independently sampled from the random viable $\tilde{G}$ and that $\mathbb{E}(\tilde{G}) = g$. By adding Gaussian noise, a perturbed gradient then can be considered as an instance sampled from random variable $G=\tilde{G}+\mathcal{G}$, where $\mathcal{G}$ follows the Gaussian distribution of mean 0 and variance $\sigma^2$. Therefore we also have $\mathbb{E}(G)=g$. In this section, if we only concentrate on a certain round $t$, we  omit the superscript on variables when ambiguity can be avoided from context.

\begin{definition}
	($(k,f)$-Byzantine Resilience) Let $0<k\leq 1$ and $f$ be the number of Byzantine nodes in a distributed system, then our anti-Byzantine mechanism $\mathcal{K}$ is said to be $(k,f)$-Byzantine Resilient if
	\begin{equation*}
		h^{(t)}  =  \mathcal{K}(g_1^{(t)}, g_2^{(t)}, \cdots, g_n^{(t)})
	\end{equation*}
	satisfies that $\langle \mathbb{E}h^{(t)}, g^{(t)} \rangle \geq k\|g^{(t)}\|^2$.
\end{definition}
\begin{theorem}
	In round $t$, If $\|g\|^2f^{-\frac{3}{4}} > 18d\sigma_f^2$ and
	\begin{equation*}
		\epsilon > \frac{\sqrt{2C\ln (1.25/\delta)}}{C_2},
	\end{equation*}
	where
	\begin{equation}
		C_2 = (\frac{\|g\|^2}{18d}f^{-\frac{3}{4}} - \sigma_f^2)^{\frac{1}{2}},
	\end{equation}
	our anti-Byzantine mechanism $\mathcal{K}$  achieves $(k,f)$-Byzantine Resilience with
	\begin{equation*}
		k = 1-\frac{3\sqrt{2}f^{\frac{3}{2}}\sqrt{d}}{\|g\|}\left(\sigma_f^2+ \frac{2C^2\ln (1.25/\delta)}{\epsilon^2}\right).
	\end{equation*}
\end{theorem}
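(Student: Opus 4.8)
The plan is to reduce the claimed lower bound on $\langle\mathbb{E}h,g\rangle$ to a second‑moment estimate on the Krum output, and then to run the classical Byzantine‑resilience argument for Krum (à la Blanchard \emph{et al.}) with the honest‑gradient variance enlarged to absorb the differential‑privacy noise. Write $h=\mathcal{K}(g_1,\dots,g_n)$. Since $\mathbb{E}(G)=g$, Cauchy--Schwarz and Jensen give
\[
\langle\mathbb{E}h,g\rangle=\|g\|^2-\mathbb{E}\langle g-h,g\rangle\;\ge\;\|g\|^2-\|g\|\,\mathbb{E}\|g-h\|\;\ge\;\|g\|^2-\|g\|\sqrt{\mathbb{E}\|g-h\|^2}.
\]
So it suffices to prove $\sqrt{\mathbb{E}\|g-h\|^2}\le(1-k)\|g\|$ with the stated $k$, i.e.\ to control how far Krum's selection can be pulled away from the true mean $g$.

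First I would pin down the per‑round fluctuation of one perturbed honest gradient. As $G=\tilde G+\mathcal{G}$ with $\mathcal{G}\sim\mathcal{N}(0,\sigma^2 I_d)$ independent of $\tilde G$, we have $\mathbb{E}\|G-g\|^2=\mathbb{E}\|\tilde G-g\|^2+d\sigma^2$; the first term is at most $d\sigma_f^2$ by definition of $\sigma_f^2$, and Lemma~\ref{41}, specialized to a single round (and absorbing the learning‑rate factor), fixes the relevant admissible coordinate noise variance as $\sigma^2=2C^2\ln(1.25/\delta)/\epsilon^2$ with $C=\Delta_2 g^{(t)}$. Hence $\mathbb{E}\|G-g\|^2\le d\,V$, where $V:=\sigma_f^2+2C^2\ln(1.25/\delta)/\epsilon^2$. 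The hypothesis $\|g\|^2 f^{-3/4}>18d\sigma_f^2$ is exactly $C_2^2>0$, and the lower bound on $\epsilon$ is exactly $2C^2\ln(1.25/\delta)/\epsilon^2<C_2^2$; together they say $V<\sigma_f^2+C_2^2=\|g\|^2 f^{-3/4}/(18d)$, which is the smallness budget that will force the resulting $k$ to be positive.

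The heart of the argument is the Krum estimate. Krum returns $h=g_{i_\ast}$ with $i_\ast=\arg\min_i s(i)$, $s(i)=\sum_{j\in near(i)}\|g_i-g_j\|^2$ over the $n-f-2$ nearest gradients $near(i)$, and I would split on whether $i_\ast$ is honest. If it is, then $\mathbb{E}\|h-g\|^2=\mathbb{E}\|G-g\|^2\le dV$ immediately. If $i_\ast$ is Byzantine, then $s(i_\ast)\le s(\ell)$ for every honest index $\ell$; using that a positive number (at least $n-2f-1$, working in the non‑degenerate regime $n\ge 2f+3$, i.e.\ $f\ge 2$) of the neighbours of $i_\ast$ are honest, a triangle‑inequality chain bounds $\|g_{i_\ast}-g\|$ by a sum of honest--honest pairwise gradient distances, each with expected square $\le 2dV$, and $s(\ell)$ is in turn controlled by honest--honest distances plus the honest--Byzantine terms of $near(\ell)$, which are themselves dominated by $s(i_\ast)$. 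Collecting all of this yields $\mathbb{E}\|h-g\|^2\le\eta(n,f)\,dV$ with a combinatorial factor $\eta(n,f)=2(n-f)+\dfrac{2f(n-f-2)+2f^2(n-f-1)}{n-2f-2}$, and a crude bound on $\eta(3f+1,f)$ produces the constant $18$ and the power of $f$ in the statement. Taking square roots gives $\sqrt{\mathbb{E}\|h-g\|^2}\le 3\sqrt{2}\,f^{3/2}\sqrt{d\,V}$, and substituting into the first display yields $\langle\mathbb{E}h,g\rangle\ge k\|g\|^2$ with $k$ of the stated form; positivity of $k$ follows from $V<\|g\|^2 f^{-3/4}/(18d)$, and the small cases $f\le 1$ are either trivial ($f=0$) or excluded.

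I expect the combinatorial bookkeeping of the Byzantine case — partitioning $near(i_\ast)$ and $near(\ell)$ into honest and Byzantine indices, assembling the triangle‑inequality chains, and taking expectations so that everything collapses into the single factor $\eta(n,f)$ — to be the main obstacle, since that is where all of the $n,f$‑dependence (and hence the exact constants appearing in the theorem) is generated; this is the genuinely delicate part of Krum's analysis rather than routine calculation. A secondary, milder point is to make the single‑round reading of Lemma~\ref{41} rigorous and to keep the dimension factor $d$ and the coordinate‑versus‑vector variance conventions consistent throughout.
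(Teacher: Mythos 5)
Your plan follows the paper's proof essentially step for step: the same reduction of $\langle\mathbb{E}h,g\rangle\geq k\|g\|^2$ to a bound on the (second moment of the) deviation of the Krum output from $g$ via Jensen and Cauchy--Schwarz, the same accounting of the per-gradient variance as $d(\sigma_f^2+\sigma^2)$ with $\sigma^2=2C^2\ln(1.25/\delta)/\epsilon^2$ taken from Lemma~\ref{41}, the same case split on whether Krum selects an honest or a Byzantine index using the score comparison $s(i_\ast)\leq s(\ell)$ and the structure of $near(\cdot)$, and the same collapse of the combinatorial factor to $36f^3$ at $n=3f+1$, followed by the positivity check under the stated hypotheses. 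The combinatorial bookkeeping you defer is exactly the content of the paper's two displayed chains of inequalities, so the proposal is a faithful outline of the paper's own argument rather than a different route.
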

\begin{proof}
	Denote by $N(i)$ the set of $n-f-2$ closest gradients of the $i$-th node, $N_c(i)$ the collection of the perturbed gradients in $N(i)$ sent by correct nodes, while $N_f(i)$ the gradients in $N(i)$ sent by the Byzantine nodes. Let $i^*$ be the index of the gradient chosen by  $\mathcal{K}$, we then have
	\begin{equation*}
		\begin{aligned}
			\|\mathbb{E} h^{(t)} - g\|^2 &= \left\|\mathbb{E}\left( h^{(t)} - \frac{1}{|N_c(i^*)|}\sum_{j\in N_c(i^*)}\left(g_i+r_i \right)\right)\right\|^2\\
			&\leq \mathbb{E} \left\|\left( h^{(t)} - \frac{1}{|N_c(i^*)|}\sum_{j\in N_c(i^*)}\left(g_i+r_i \right)\right)\right\|^2\\
            \text{(the above} &\text{ inequality holds since $\|\cdot\|$ is convex.)}\\
			&\leq\sum_{i \notin \mathcal{B}} \mathbb{E} \left\|\left( g_i+r_i - \frac{1}{|N_c(i)|}\sum_{j\in N_c(i)}\left(g_j+r_j \right)\right)\right\|^2\\
			&+  \sum_{i \in \mathcal{B}} \mathbb{E} \left\|\left( \mathcal{B}_i - \frac{1}{|N_c(i)|}\sum_{j\in N_c(i)}\left(g_j+r_j \right)\right)\right\|^2
		\end{aligned}
	\end{equation*}
	If $i^* = i$ is one of the correct nodes, i.e., $i^*\notin \mathcal{B}$, 
	\begin{equation*}
		\begin{aligned}
			&\mathbb{E}\left\|\left( g_i+r_i - \frac{1}{|N_c(i)|}\sum_{j\in N_c(i)}\left(g_j+r_j \right)\right)\right\|^2\\
			&=\mathbb{E}\left\|\left(  \frac{1}{|N_c(i)|}\sum_{j\in N_c(i)}\left(g_i+r_i - g_j-r_j \right)\right)\right\|^2\\
			&=\frac{1}{|N_c(i)|^2}\mathbb{E}\left\|\left( \sum_{j\in N_c(i)}\left(g_i - g_j\right) + \left(r_i - r_j \right)\right)\right\|^2\\
			&=\frac{1}{|N_c(i)|^2}\mathbb{E}\left\|\left( \sum_{j\in N_c(i)}\left(g_i - g_j\right) + \left(r_i - r_j \right)\right)\right\|^2\\
			&\leq \frac{1}{|N_c(i)|} \sum_{j\in N_c(i)}\mathbb{E}\left\|g_i - g_j\right\|^2 + \mathbb{E}\left\|r_i - r_j \right\|^2\\
			&\leq 2d(\sigma^2+\sigma_f^2).
		\end{aligned}
	\end{equation*}

	The above inequality holds since $\mathbb{E}(r_i-r_j)$ and $\mathbb{E}(g_i-g_j)$ are both 0. Because there are exactly $n-f$ correct nodes, we have
	\begin{equation*}
		\begin{aligned}
			&\mathbb{E} \left\|\left( g_i+r_i - \frac{1}{|N_c(i)|}\sum_{j\in N_c(i)}\left(g_j+r_j \right)\right)\right\|^2\\
			&\leq 2d(n-f)(\sigma^2+\sigma_f^2).
		\end{aligned}
	\end{equation*}
	If $i^* = k$ is one of the Byzantine nodes, i.e., $i^*\in \mathcal{B}$, 
	\begin{equation*}
		\begin{aligned}
			&\mathbb{E} \left\|\left( \mathcal{B}_k - \frac{1}{|N_c(k)|}\sum_{j\in N_c(k)}\left(g_j+r_j \right)\right)\right\|^2\\
			&\leq \frac{1}{|N_c(k)|} \sum_{j\in N_c(k)}\mathbb{E} \|\mathcal{B}_k - g_j-r_j\|^2\\
			&\leq \frac{1}{|N_c(k)|} \sum_{j\in N_c(i)}\mathbb{E} \|g_i + r_i - g_j-r_j\|^2\\
			&+\frac{1}{|N_c(k)|} \sum_{j\in N_f(i)}\mathbb{E} \|g_i + r_i - g_j-r_j\|^2,\\
		\end{aligned}
	\end{equation*}
	where $i\notin \mathcal{B}$ is any correct node. By the definition of $N(i)$,  a node labeled $\zeta(i)$ is correct, but is farther away from any node in $N(i)$. Therefore we have:
	\begin{equation*}
		\begin{aligned}
			&\mathbb{E} \left\|\left( \mathcal{B}_k - \frac{1}{|N_c(k)|}\sum_{j\in N_c(k)}\left(g_j+r_j \right)\right)\right\|^2\\
			&\leq \frac{1}{|N_c(k)|} \sum_{j\in N_c(i)} \mathbb{E}\|g_i + r_i - g_j - r_j\|^2\\
			&+\frac{N_f(i)}{N_c(k)} \mathbb{E}\|g_i+r_i - g_{\zeta(i)} - r_{\zeta(i)}\|^2\\
			&\leq \frac{N_c(i)}{N_c(k)} 2d(\sigma^2+\sigma_f^2) + \frac{N_f(i)}{N_c(k)}  \sum_{j\notin \mathcal{B}\land i\neq i} \mathbb{E} \|g_i-g_j+r_i-r_j\|^2\\
			&\leq 2d(\sigma^2+\sigma_f^2)\left(\frac{N_c(i)}{N_c(k)}+ \frac{N_f(i)}{N_c(k)} (n-f-1)\right)\\
			&\leq 2d(\sigma^2+\sigma_f^2)\left(\frac{n-f-2}{n-2f-2}+\frac{b}{n-2f-2} (n-f-1)\right).\\
		\end{aligned}
	\end{equation*}
	Combining the two results where $i^*$ is a correct node or a Byzantine node, we have:
	\begin{equation*}
		\begin{aligned}
			\|\mathbb{E}h^{(t)} - g\|^2 &\leq 2d(\sigma^2+\sigma_f^2)\left(n-f+\frac{n-f-2}{n-2f-2}\right)\\
			&+2d(\sigma^2+\sigma_f^2)\left(\frac{b}{n-2f-2} (n-f-1)\right)\\
			&\leq 2d(\sigma^2+\sigma_f^2)(f+3+f(f-1)+f^2(f+2))\\
			&\leq 36df^{3}(\sigma^2+\sigma_f^2).
		\end{aligned}
	\end{equation*}
	Since 
	\begin{equation*}
		\epsilon > \frac{\sqrt{2C\ln (1.25/\delta)}}{C_2},
	\end{equation*}
	then 
	\begin{equation*}
		\epsilon^2\left(\frac{\|g\|^2}{18d}b^{-\frac{4}{3}} -\sigma_f^2\right) > 2c\ln (1.25/\delta).
	\end{equation*}
	Therefore
	\begin{equation*}
		\|g\| > 3\sqrt{2}b^{\frac{3}{2}}\sqrt{d}(\sigma_f^2+\sigma^2)^{\frac{1}{2}}.
	\end{equation*}
	Finally, we have
	\begin{equation*}
		\langle \mathbb{E} h^{(t)}, g \rangle \geq \left(\|g\| - 3\sqrt{2}b^{\frac{3}{2}}\sqrt{d}(\sigma_f^2+\sigma^2)^{\frac{1}{2}}\right) \|g\| = k\|g\|^2.
	\end{equation*}
\end{proof}

Regret analysis is commonly used in online learning to investigate the loss difference caused by two learning methods. Therefore we present a regret analysis on SPDL to show the  incurred loss difference with and without Byzantine nodes.

In a decentralized system with Byzantine nodes, let $A_t \in \mathbb{R}^{d\times 1}$ be the parameter model of a node in round $t$, and $X_t\in \mathbb{R}^{d\times n}$ be the raw data randomly sampled by nodes in round $t$. The loss function in round $t$ can be written as $F(A_t, X_t)$. Consider the destructive effect on the training process caused by Byzantine nodes, we denote by $\tilde{A}_t$ the parameter learned by the system per round in the absence of Byzantine nodes. It's meaningful to compare the gap of loss function computed by two different parameters $A_t$ and $\tilde{A}_t$. If there is no Byzantine node, we assume $\tilde{A}_t$ is updated by any correct gradient $g_{\xi(t)}^{(t)}$ given by a random node $\xi(t)\in \{1,2,\cdots, n\}$.
\begin{definition}
	(Regret)
	\begin{equation*}
		\begin{aligned}
			\mathcal{R}(A, \tilde{A}) = \sum_{i=0}^{T-1} F(\mathbb{E} A_t, X_t) - F(\mathbb{E} \tilde{A}_t, X_t).
		\end{aligned}
	\end{equation*}
\end{definition}
\begin{theorem}
	If the loss function satisfies $L_1$-Lipschitz continuity, and $L_1 < 1$, then $\mathcal{R}(A, \tilde{A}) \leq \rho \sqrt{T}+ o(\sqrt{T})$, where
	$$\rho = \frac{6L_1 f^{\frac{3}{2}}\sqrt{d(\sigma^2+\sigma_f^2)}}{1-L_1}.$$
\end{theorem}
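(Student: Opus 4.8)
The plan is to turn the regret into a statement about the distance between the two parameter trajectories and then to control that distance with the one‑round bound on the expected Krum output established in the previous theorem. First I would invoke $L_1$‑Lipschitz continuity of $F$ in its parameter argument to bound every summand,
\begin{equation*}
F(\mathbb{E} A_t, X_t) - F(\mathbb{E}\tilde A_t, X_t) \le L_1 \, \|\mathbb{E} A_t - \mathbb{E}\tilde A_t\| ,
\end{equation*}
so that it suffices to bound $\sum_{t=0}^{T-1} e_t$, where $e_t := \|\mathbb{E} A_t - \mathbb{E}\tilde A_t\|$.

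Next I would set up a drift recursion for $e_t$. From the two updates $A_{t+1} = A_t - \gamma_t h^{(t)}$, with $h^{(t)} = \mathcal{K}(g_1^{(t)},\dots,g_n^{(t)})$ the Krum output that the protocol subtracts, and $\tilde A_{t+1} = \tilde A_t - \gamma_t g_{\xi(t)}^{(t)}$, together with $\mathbb{E} g_{\xi(t)}^{(t)} = g^{(t)}$ and the bound $\|\mathbb{E} h^{(t)} - g^{(t)}\| \le 6 f^{3/2}\sqrt{d(\sigma^2+\sigma_f^2)}$ supplied by the previous theorem, the Byzantine perturbation contributes at most $\gamma_t \cdot 6 f^{3/2}\sqrt{d(\sigma^2+\sigma_f^2)}$ to the expected deviation in one step; the remaining part of the two updates is the same gradient‑descent map applied to $\mathbb{E} A_t$ and to $\mathbb{E}\tilde A_t$, which, since $F$ is $L_1$‑Lipschitz with $L_1 < 1$, behaves as a contraction with factor $L_1$. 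This yields
\begin{equation*}
e_{t+1} \le L_1 e_t + \gamma_t \cdot 6 f^{3/2}\sqrt{d(\sigma^2+\sigma_f^2)}, \qquad e_0 = 0 ,
\end{equation*}
the zero initial value coming from the common initialization $A_0 = \tilde A_0$.

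Then I would unroll the recursion, $e_t \le 6 f^{3/2}\sqrt{d(\sigma^2+\sigma_f^2)}\sum_{s<t} L_1^{\,t-1-s}\gamma_s$, exchange the order of summation in $\sum_{t=0}^{T-1} e_t$, and bound the inner geometric series by $\sum_{k\ge 0} L_1^{k} = (1-L_1)^{-1}$, giving $\sum_{t=0}^{T-1} e_t \le \frac{6 f^{3/2}\sqrt{d(\sigma^2+\sigma_f^2)}}{1-L_1}\sum_{s}\gamma_s$. With the diminishing step size used in the analysis, $\gamma_s$ of order $1/\sqrt{s}$, the partial sum $\sum_{s\le T}\gamma_s$ is of order $\sqrt T$ up to a bounded term, so combining with the Lipschitz reduction of the first step and collecting constants yields $\mathcal{R}(A,\tilde A) \le \rho\sqrt T + o(\sqrt T)$ with $\rho = \frac{6 L_1 f^{3/2}\sqrt{d(\sigma^2+\sigma_f^2)}}{1-L_1}$; the transient terms of the recursion and the lower‑order discrepancy in the partial sum are absorbed into the $o(\sqrt T)$.

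The main obstacle is the drift‑recursion step. The one‑round inequality from the previous theorem compares $\mathbb{E} h^{(t)}$ to the ideal gradient at a single point, whereas here $h^{(t)}$ and the correct gradient $g_{\xi(t)}^{(t)}$ are evaluated along two different trajectories; one must carefully peel off the Byzantine bias and then use the contraction of the descent map to keep the deviation from accumulating linearly, and it is precisely this contraction (hence the hypothesis $L_1 < 1$) together with the resulting need for a diminishing step size that makes the $\sqrt T$ rate come out rather than an $O(T)$ bound. The remaining manipulations — the geometric series and the estimate of $\sum_s\gamma_s$ — are routine.
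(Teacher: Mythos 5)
Your proposal follows essentially the same route as the paper: reduce the regret to $\sum_t\|\mathbb{E}A_t-\mathbb{E}\tilde A_t\|$ via $L_1$-Lipschitz continuity, unfold one update step, invoke the bound $\|\mathbb{E}h^{(t)}-g^{(t)}\|\le 6f^{3/2}\sqrt{d(\sigma^2+\sigma_f^2)}$ inherited from the Byzantine-resilience theorem, sum the resulting recursion to pick up the $1/(1-L_1)$ factor, and set the step size to $\Theta(1/\sqrt{T})$ to obtain the $\sqrt{T}$ rate; the paper does all of this in a single chain of inequalities with the constant step $\eta=1/\sqrt{T}$ and then notes the choice of $\eta$ after the proof.

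Two remarks. First, your justification of the contraction $e_{t+1}\le L_1 e_t+\gamma_t B$ --- that $L_1$-Lipschitz continuity of $F$ makes the descent map $x\mapsto x-\gamma\nabla F(x)$ an $L_1$-contraction --- does not follow from the stated hypothesis (Lipschitzness of $F$ controls function values, not the map on parameters). However, the paper's own derivation has exactly the same weak link: the triangle inequality there only yields $L_1e_t\le L_1e_{t-1}+\eta L_1 B$, i.e.\ $e_t\le e_{t-1}+\eta B$ with no contraction, yet the final line silently uses the contracting form to produce the $1/(1-L_1)$ factor; so you have reproduced the paper's argument including its gap rather than introduced a new one. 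Second, your diminishing step $\gamma_s\sim 1/\sqrt{s}$ gives $\sum_{s\le T}\gamma_s\approx 2\sqrt{T}$ and hence a leading constant $2\rho$ rather than $\rho$; the paper's constant step $\eta=1/\sqrt{T}$ is what yields $\rho$ exactly.
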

\begin{proof}
	By the property of $L_1$-Lipschitz continuity, we have 
	\begin{equation*}
		\begin{aligned}
			&\sum_{i=0}^{T-1} F(\mathbb{E}A_t, X_t) - F(\mathbb{E}\tilde{A}_t, X_t)\\ 
			&\leq \sum_{i=0}^{T-1} L_1\|\mathbb{E}A_t - \mathbb{E}\tilde{A}_t\|\\
			&\leq  \sum_{i=1}^{T-1} L_1\|\mathbb{E}A_{t-1} - \mathbb{E}\tilde{A}_{t-1} +\eta \mathbb{E}\left(h^{(t-1)} - g_{\xi(t)}^{(t-1)}\right) \|\\
			&\leq  \sum_{i=1}^{T-1}L_1\mathbb{E}\|A_{t-1} - \tilde{A}_{t-1}\| +  \eta L_1\|\mathbb{E}h^{(t-1)} - \mathbb{E} g_{\xi(t-1)}^{(t-1)}\|.
		\end{aligned}
	\end{equation*}
	Since 
	\begin{equation*}
		\begin{aligned}
			%&\mathbb{E}\|g_{\xi(t)}^{(t)} - g^{(t-1)}\| \leq \sigma_f \\
			&\mathbb{E}\|\mathbb{E} h^{(t)} - g^{(t-1)}\| \leq 6f^{\frac{3}{2}}\sqrt{d(\sigma^2+\sigma_f^2)},
		\end{aligned}
	\end{equation*}
	we have
	\begin{equation*}
		\begin{aligned}
			\eta \|\mathbb{E}h^{(t-1)} - \mathbb{E} g_{\xi(t-1)}^{(t-1)}\| \leq 6\eta f^{\frac{3}{2}}\sqrt{d(\sigma^2+\sigma_f^2)}.\\
		\end{aligned}
	\end{equation*}
	Then we obtain
	\begin{equation*}
		\begin{aligned}
			&\sum_{i=0}^{T-1} F(\mathbb{E}A_t, X_t) - F(\mathbb{E}\tilde{A}_t, X_t)\\ 
			&\leq \sum_{i=1}^{T-1}L_1\mathbb{E}\|A_{t-1} - \tilde{A}_{t-1}\| + 6L_1\eta f^{\frac{3}{2}}\sqrt{d(\sigma^2+\sigma_f^2)}\\
			&\leq {6L_1\eta f^{\frac{3}{2}}\sqrt{d(\sigma^2+\sigma_f^2)}} + \frac{6TL_1\eta f^{\frac{3}{2}}\sqrt{d(\sigma^2+\sigma_f^2)}}{1-L_1}.
		\end{aligned}
	\end{equation*}
\end{proof}
Thus the theorem can be immediately proved by setting $\eta = \displaystyle\frac{1}{\sqrt{T}}$.

\begin{figure*}[!htbp]
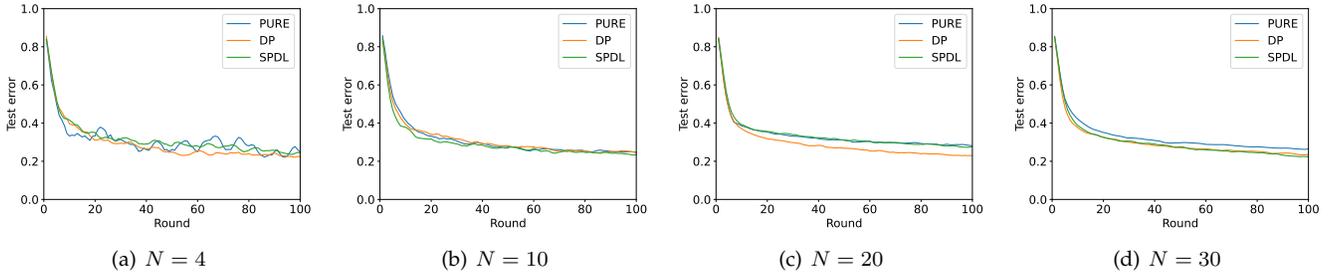

	\centering
	\subfigure[$N=4$]{
		\label{fig:sr}
		\centering
		\includegraphics[width=0.23\textwidth]{Fig1N4.pdf}
	}
	% \hspace{.1in}
	\subfigure[$N=10$]{
		\label{fig:rw}
		\centering
		\includegraphics[width=0.23\textwidth]{Fig1N10.pdf}
	}
	\subfigure[$N=20$]{
		\label{fig:sr:dif}
		\centering
		\includegraphics[width=0.23\textwidth]{Fig1N20.pdf}
	}
	\subfigure[$N=30$]{
		\label{fig:rw:dif}
		\centering
		\includegraphics[width=0.23\textwidth]{Fig1N30.pdf}
	}
	\caption{Test error evolution with various network size $N=4, 10, 20, 30$.}
	\label{fig:convergence}
\end{figure*}
\begin{figure*}[!htbp]
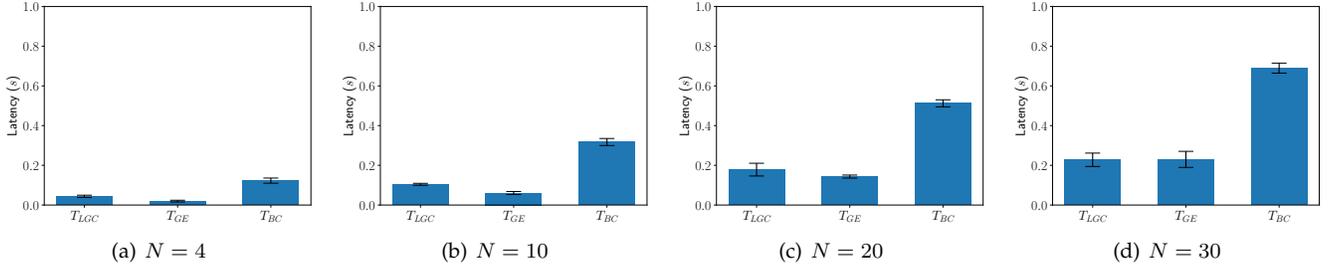

	\centering
	\subfigure[$N=4$]{
		\label{fig:sr}
		\centering
		\includegraphics[width=0.23\textwidth]{Fig3N4.pdf}
	}
	% \hspace{.1in}
	\subfigure[$N=10$]{
		\label{fig:rw}
		\centering
		\includegraphics[width=0.23\textwidth]{Fig3N10.pdf}
	}
	\subfigure[$N=20$]{
		\label{fig:sr:dif}
		\centering
		\includegraphics[width=0.23\textwidth]{Fig3N20.pdf}
	}
	\subfigure[$N=30$]{
		\label{fig:rw:dif}
		\centering
		\includegraphics[width=0.23\textwidth]{Fig3N30.pdf}
	}
	\caption{Latency of different stages ($T_{LGC}$, $T_{GE}$ and $T_{BC}$) concerning $N=4, 10, 20, 30$.}
	\label{fig:latency}
\end{figure*}

\begin{figure*}[!htbp]
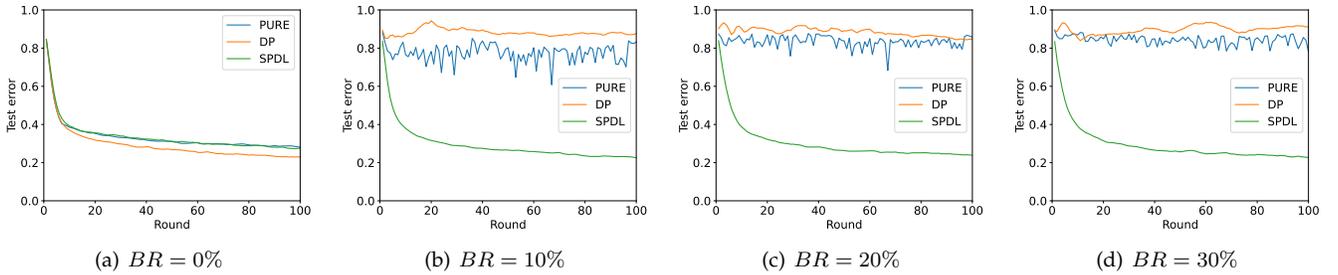

	\centering
	\subfigure[$BR=0\%$]{
		\label{fig:Byzantine:br0}
		\centering
		\includegraphics[width=0.23\textwidth]{Fig4f0.pdf}
	}
	% \hspace{.1in}
	\subfigure[$BR=10\%$]{
		\label{fig:Byzantine:br10}
		\centering
		\includegraphics[width=0.23\textwidth]{Fig4f10.pdf}
	}
	\subfigure[$BR=20\%$]{
		\label{fig:Byzantine:br20}
		\centering
		\includegraphics[width=0.23\textwidth]{Fig4f20.pdf}
	}
	\subfigure[$BR=30\%$]{
		\label{fig:Byzantine:br30}
		\centering
		\includegraphics[width=0.23\textwidth]{Fig4f30.pdf}
	}
	\caption{Test error evolution with various Byzantine ratio $BR\in{0\%, 10\%, 20\%, 30\%}$.}
	\label{fig:Byzantine}
\end{figure*}

\begin{figure}[!htbp]
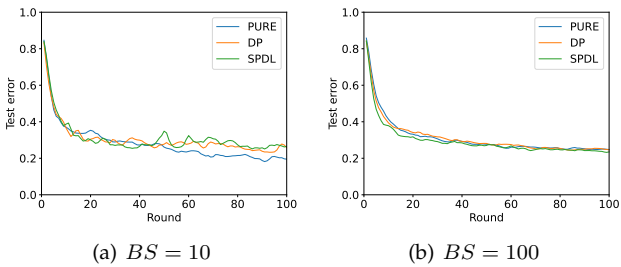

	\centering
	\subfigure[$BS = 10$]{
		\label{fig:batch:size:BS10}
		\centering
		\includegraphics[width=0.22\textwidth]{Fig2B10.pdf}
	}
	% \hspace{.1in}
	\subfigure[$BS = 100$]{
		\label{fig:batch:size:BS100}
		\centering
		\includegraphics[width=0.22\textwidth]{Fig2B100.pdf}
	}
	\caption{Test error evolution with batch sizes $BS=10, 100$.}
	\label{fig:batch:size}
\end{figure}
\begin{figure}[!htbp]
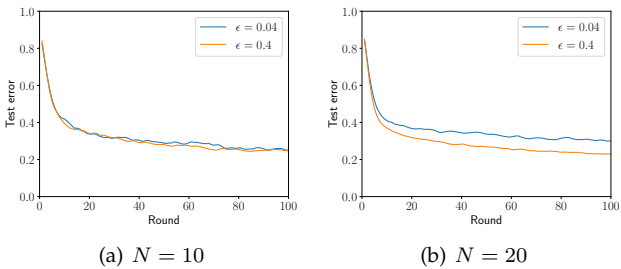

	\centering
	\subfigure[$N = 10$]{
		\label{fig:epsilon:N10}
		\centering
		\includegraphics[width=0.22\textwidth]{Fig5N10.pdf}
	}
	% \hspace{.1in}
	\subfigure[$N = 20$]{
		\label{fig:epsilon:N20}
		\centering
		\includegraphics[width=0.22\textwidth]{Fig5N20.pdf}
	}
	\caption{Test error evolution with $\epsilon=0.4, 0.04$ and $N=10, 20$.}
	\label{fig:epsilon}
\end{figure}

\section{Evaluation} \label{sec:evaluation}

\subsection{Configuration}
We implement SPDL with 3500 lines of Python code and conduct the experiments on a DELL PowerEdge R740 server which has 2 CPUs (Intel Xeon 4214R) with 24 cores (2.40 GHz) and 128 GB of RAM. SPDL adopts the gRPC framework, a P2P network for underlying communications, Pytorch for machine learning libraries, and a blockchain system with the PBFT consensus algorithm. We make SPDL open-sourced at Github\footnote{https://github.com/isSPDL/SPDL}. Nodes bootstrap by generating key pairs using ECDSA, initializing the genesis block, establishing the gRPC connection, exchanging node list, and joining the P2P network. We evaluate SPDL over the image classification of MNIST dataset, which consists of handwritten digits of 70,000 $28\times 28$ images in 10 classes. The dataset is equally divided into $N$ groups, with each assigned to one node. Each node can add Gaussian noise to its local gradients with the setting of $\epsilon=0.02$ (if not stated otherwise) and $\delta=10^{-6}$. We evaluate the performance of SPDL using the following standard metrics. 1) Test error: the fraction of wrong predictions among all predictions, using the test dataset. We measure the test error with respect to rounds, network size, batch size, privacy budget, and Byzantine ratio. 2) Latency: the latency of each round. 

\subsection{Evaluation Results}

\textbf{Convergence with Network Size:} For simplicity in our context, we denote ``PURE'' as the decentralized learning scheme without leveraging any DP technique, BFT GARs, and blockchain system, and use ``DP'' to represent a decentralized learning scheme using the DP technique only based on ``PURE''. We first compare our SPDL with PURE and DP schemes in a non-Byzantine environment. As shown in Fig.~\ref{fig:convergence}, the test error nearly converges after 20 rounds, but fluctuates a lot when $N$ is as small as four. When $N=30$, all schemes almost achieve the same convergence. The SPDL and DP schemes sometimes (e.g., $N=20$ or $N=30$ in our experiments) have lower test error than PURE because adding noises could prevent the training process from over-fitting. Besides, the network size does not impact the convergence rate and a large network size contributes to stable convergence. 

\textbf{Latency:} To better illustrate the latency of each round, we divide a round into three stages: local gradient computation plus adding noise whose overall time overhead is denoted by $T_{LGC}$, gradient exchange ($T_{GE}$), and blockchain consensus ($T_{BC}$). As Fig.~\ref{fig:latency} shows, $T_{LGC}$, $T_{GE}$ and $T_{BC}$ are in the same order of magnitude. $T_{GE}$ grows with $N$ simply because more nodes contend for computational resources. The MNIST classification task can be finished quickly ($<$0.1 s/round), so $T_{LGC}$ is lower than $T_{BC}$ in our experiments. When the machine learning task becomes more difficult (e.g., 10 min/round), $T_{BC}\approx 1 s$ is an acceptable overhead and could be even ignored. 

\textbf{Convergence in the Presence of Byzantine Nodes:} We then make a comparison of three different schemes {PURE, DP, SPDL} with respect to Byzantine Ratio ($BR$), which is the number of existing Byzantine nodes over $N$. We set $N=20$ and $BR\in{0\%, 10\%, 20\%, 30\%}$ considering $f=33\%\times N$, where $BR=0$ represents the non-Byzantine case. As shown in Fig.~\ref{fig:Byzantine}, the results of the non–Byzantine experiments indicate that the three schemes can achieve similar convergence. However, DP and PURE schemes have high test error when $BR>0\%$, and fail to ensure model convergence even in the presence of $10\%N$ Byzantine nodes. It is clearly shown that SPDL can still grantee the same convergence with respect to different levels of Byzantine attacks.

\textbf{Batch Size:} We then present the performance of the three schemes {PURE, DP, SPDL} with two different batch sizes (abbreviated as ``BS'') in Fig.~\ref{fig:batch:size}. When $BS=10$, the test error in all deployments fluctuate a lot, with the PURE scheme outperforming others because adding noises can perturb the model convergence. However, Fig.~\ref{fig:batch:size:BS100} indicates that we can increase the batch size to ensure a stable convergence and make our SPDL perform well as a PURE scheme.

\textbf{Privacy Budget:} We finally test our SPDL scheme by setting $\delta=10^{-6}$ with two varying $\epsilon\in{0.4, 0.04}$. A smaller $\epsilon$ represents stronger privacy guarantee. The results presented in Fig.~\ref{fig:epsilon} demonstrate that when $N=10$, adding noise with $\epsilon=0.4$ or $\epsilon=0.04$ have similar convergence. However, when $N=20$, smaller $\epsilon$ can cause larger test error. This implies that the tradeoff between accuracy and privacy preservation should be carefully adjusted according to specific demands on privacy protection and model accuracy.

\section{Conclusion} \label{sec:conclusion}
SPDL is a new decentralized machine learning scheme which ensures efficiency while achieving strong security and privacy gurantee. In particular, SPDL utilizes BFT consensus and BFT GAR to protect model updates from harsh Byzantine behaviors, leverages blockchain to enjoy the benefits of transparency and traceability, and adopts the DP technique for privacy protection. We provide rigorous theoretical analysis on the effectiveness of our scheme and conduct extensive studies on the performance of SPDL with variations of network size, batch size, privacy budget, and Byzantine ratio.

\bibliographystyle{IEEEtran}
\bibliography{ref}

%Minghui Xu
\begin{IEEEbiography}[{\includegraphics[width=1in,height=1.25in,clip,keepaspectratio]{MinghuiXu}}]{Minghui Xu} received his PhD degree in Computer Science from The George Washington University in 2021, and received the BS degree in Physics from the Beijing Normal University in 2018. He is currently an Assistant Professor
in the School of Computer Science and Technology, Shandong University, China. His current research focuses on blockchain, distributed computing, and applied cryptography.
\end{IEEEbiography}

%Zongrui Zou
\begin{IEEEbiography}[{\includegraphics[width=1in,height=1.25in,clip,keepaspectratio]{ZongruiZou}}]{Zongrui Zou} is currently working toward the under-graduate degree with the School of Computer Science and Technology, Shandong University, Qingdao, China. His research interests mainly include theoretical aspects of private data analysis and machine learning.
\end{IEEEbiography}

\begin{IEEEbiography}[{\includegraphics[width=1in,height=1.25in,clip,keepaspectratio]{YeCheng}}]{Ye Cheng} received his bachelor's degree in mechanical engineering from Wuhan University of Technology in 2018. He is working toward a master’s degree in Computer Science and Technology at Shandong University in China. His current research direction is blockchain and privacy protection.
\end{IEEEbiography}

%Qin Hu
\begin{IEEEbiography}[{\includegraphics[width=1in,height=1.25in,clip,keepaspectratio]{huqin.pdf}}]{Qin Hu} received her Ph.D. degree in Computer Science from the George Washington University in 2019. She is currently an Assistant Professor with the Department of Computer and Information Science, Indiana University-Purdue University Indianapolis (IUPUI). Her research interests include wireless and mobile security, edge computing, blockchain, and crowdsourcing/crowdsensing.
\end{IEEEbiography}

%Dongxiao Yu
\begin{IEEEbiography}[{\includegraphics[width=1in,height=1.25in,clip,keepaspectratio]{DongxiaoYu}}]{Dongxiao Yu} received his BS degree in Mathematics in 2006 from Shandong University, and PhD degree in Computer Science in 2014 from The University of Hong Kong. He became an associate professor in the School of Computer Science and Technology, Huazhong University of Science and Technology, in 2016. Currently he is a professor at the School of Computer Science and Technology, Shandong University. His research interests include wireless networking, distributed computing, and graph algorithms.
\end{IEEEbiography}

%Xiuzhen Cheng
\begin{IEEEbiography}[{\includegraphics[width=1in,height=1.25in,clip,keepaspectratio]{XiuzhenCheng}}]{Xiuzhen Cheng} received her MS and PhD degrees in computer science from University of Minnesota, Twin Cities, in 2000 and 2002, respectively. She was a faculty member at the Department of Computer Science, The George Washington University,  from 2002-2020. Currently she is a professor of computer science at Shandong University, Qingdao, China. Her research focuses on blockchain computing, security and privacy, and Internet of Things. She is a Fellow of IEEE.
\end{IEEEbiography}

\end{document}